%Pim: Submitted to TCS on 5 September 2013
%Steve: update affiliation Aug 28 2014 and submitted to arxiv.
\documentclass{llncs_dagger}
\usepackage[english]{babel}
\usepackage[utf8]{inputenc}
\usepackage{amssymb,upgreek} 
\usepackage{amsmath,xspace}
\usepackage{color}
\usepackage{graphicx}
\usepackage{hyperref}

\pagestyle{plain}
\newcommand{\cG}{{\mathcal G}}

\newcommand{\NP}{{\sf NP}}

\newcommand{\FPT}{{\sf FPT}}

\newcommand{\W}{{\sf W}}
\newcommand{\tw}{{\textup{tw}}}

\DeclareMathOperator{\drm}{drm}

\newcommand{\fullc}{\ensuremath\xrightarrow{_B}\xspace}
\newcommand{\surjc}{\ensuremath\xrightarrow{_S}\xspace}
\newcommand{\parc}{\ensuremath\xrightarrow{_I}\xspace}

\iffalse
%\iftrue
\newcommand{\Hom}{{\sc Homomorphism}}

\newcommand{\xLBHom}[1]{{\sc Locally Bijective $#1$-Homomorphism}}
\newcommand{\xLIHom}[1]{{\sc Locally Injective $#1$-Homomorphism}}
\newcommand{\xLSHom}[1]{{\sc Locally Surjective $#1$-Homomorphism}}
\else
\newcommand{\Hom}{{\sc Hom}}
\newcommand{\LBHom}{{\sc LBHom}}
\newcommand{\LIHom}{{\sc LIHom}}
\newcommand{\LSHom}{{\sc LSHom}}
\newcommand{\xLBHom}[1]{{\sc $#1$-LBHom}}
\newcommand{\xLIHom}[1]{{\sc $#1$-LIHom}}
\newcommand{\xLSHom}[1]{{\sc $#1$-LSHom}}
\fi

{\bf}{\it}
 
\begin{document}
\title{Locally Constrained Homomorphisms on Graphs of Bounded Treewidth and Bounded Degree\thanks{This paper is supported by the Natural Sciences Engineering Research Council of Canada (NSERC), the Research Council of Norway (197548/F20), EPSRC (EP/G043434/1) and the Royal Society (JP100692). An extended abstract of it appeared in the proceedings of FCT 2013, LNCS 8070: 121-132.}}

\titlerunning{On constrained homomorphisms}

\author{ 
Steven Chaplick \inst{1,}\thanks{Supported by the ESF GraDR EUROGIGA grant as project GACR GIG/11/E023 and the NSERC grants of: K. Cameron and C. Ho\`ang  (Wilfrid Laurier University), D. Corneil (University of Toronto), and P. Hell (Simon Fraser University).}
\and
Ji\v{r}\'i Fiala \inst{2,}\thanks{Supported by M\v{S}MT \v{C}R grant LH12095 and GA\v{C}R grant P202/12/G061.}
\and
Pim van~'t Hof \inst{3}
\and\\
Dani\"el Paulusma \inst{4}
\and
Marek Tesa\v{r} \inst{2}
}

\institute{
Institut fur Mathematik, TU Berlin, Germany\\
\texttt{chaplick@math.tu-berlin.de} 
Department of Applied Mathematics, Charles University, Prague, Czech Republic\\
\texttt{\{fiala,tesar\}@kam.mff.cuni.cz}
\and
Department of Informatics, University of Bergen, Norway\\
\texttt{pim.vanthof@ii.uib.no}
\and
School of Engineering and Computing Sciences, Durham University, UK\\
\texttt{daniel.paulusma@durham.ac.uk}}

\maketitle

\begin{abstract}
A homomorphism from a graph $G$ to a graph $H$ is locally bijective, surjective, or injective if its restriction to the neighborhood of every vertex of $G$ is bijective, surjective, or injective, respectively. 
We prove that the problems of testing whether a given graph $G$ allows a homomorphism to a given graph $H$ that is locally bijective, surjective, or injective, respectively, are \NP-complete, even 
when $G$ has pathwidth at most $5$, $4$, or $2$, respectively, 
or when both $G$ and $H$ have  maximum degree~$3$.
We complement these hardness results by showing that the three problems are polynomial-time solvable if $G$ has bounded treewidth and in addition $G$ or $H$ has bounded maximum degree.
\end{abstract}

\begin{keywords}
Computational complexity; locally constrained graph homomorphisms; bounded treewidth; bounded degree
\end{keywords}

\section{Introduction}

All graphs considered in this paper are finite, undirected, and have neither self-loops nor multiple edges.
A {\it graph homomorphism} from a graph $G=(V_G,E_G)$ to a graph $H=(V_H,E_H)$ is a mapping $\varphi: V_G \to V_H$ that maps adjacent vertices of $G$ to adjacent
vertices of $H$, i.e., $\varphi(u)\varphi(v)\in E_H$ whenever $uv\in E_G$. 
The notion of a graph homomorphism is well studied in the literature due to its many practical and theoretical applications;
we refer to the textbook of Hell and Ne\v{s}et\v{r}il~\cite{HN04} for a survey.

We write $G\to H$ to indicate the existence of a homomorphism from $G$ to $H$.
We call $G$ the {\it guest graph} and $H$ the {\it host graph}. We denote the vertices of $H$ by $1,\ldots,|H|$ and call them {\it colors}. 
The reason for doing this is that graph homomorphisms generalize graph colorings: there exists a homomorphism from a graph $G$ to a complete graph on $k$ vertices if and only if $G$ is $k$-colorable.
The problem of testing whether $G\to H$ for two given graphs $G$ and $H$ is called the \Hom{} problem. If only the guest graph is part of the input and the host graph is {\it fixed}, i.e., not part of the input, then this problem is denoted as $H$-\Hom{}.
The classical result in this area is the Hell-Ne\v{s}et\v{r}il dichotomy
theorem which states that  $H$-\Hom{} is solvable in polynomial time if $H$ is bipartite, and \NP-complete otherwise~\cite{HN90}.

We consider so-called {\it locally constrained} homomorphisms.
The {\it neighborhood} of a vertex $u$ in a graph $G$ is denoted
$N_G(u)=\{v\in V_G\; |\; uv\in E_G\}$. 
If for every $u\in V_G$ the restriction of $\varphi$ to the neighborhood
of $u$, i.e., the mapping $\varphi_u:N_G(u)\to N_H(\varphi(u))$, is injective, bijective, or surjective, then $\varphi$ is said to be {\it locally injective}, {\it locally bijective},  or {\it locally surjective}, respectively. 
Locally bijective homomorphisms are also called {\it graph coverings}. They
originate from topological graph theory~\cite{Bi74,Ma67} and have applications in distributed 
computing~\cite{An80,AG81,Bo89}
and in constructing highly transitive regular graphs~\cite{Bi82}. 
Locally injective homomorphisms are also called {\it partial graph coverings}. They
have applications in models of telecommunication~\cite{FK02} and in
distance constrained labeling~\cite{FKK01}. Moreover, they are used 
as indicators of the existence of homomorphisms  of 
derivative graphs~\cite{Ne71}.
Locally surjective homomorphisms are also called {\it color dominations}~\cite{KT00}. In addition they are known as  {\it role
assignments} due to their applications in social science~\cite{EB91,PR01,RS01}.
Just like locally bijective homomorphisms they also have applications in distributed computing~\cite{CMZ04}.

If there exists a homomorphism from a graph $G$ to a graph $H$ that is locally bijective, locally injective, or locally surjective,  
respectively, then we write $G\fullc H$, $G\parc H$, and $G\surjc H$,  
respectively.
We denote the decision problems that are to test whether $G\fullc H$, $G\parc H$, or $G\surjc H$ 
for two given graphs $G$ and $H$ by \LBHom{}, \LIHom{} and \LSHom{}, respectively.
All three problems are known to be \NP-complete when both guest and host graphs are given as input 
(see below for details), and attempts have been made to classify their computational complexity when only the guest graph belongs to the input and the host graph is fixed.
The corresponding problems are denoted by \xLBHom{H}, \xLIHom{H}, and \xLSHom{H}, respectively.
The \xLSHom{H} problem is polynomial-time solvable 
either if $H$ has no edge or if $H$ is bipartite and has at least one connected component isomorphic to an edge; 
in all other cases \xLSHom{H} is \NP-complete, 
even when the guest graph belongs to the class of bipartite graphs~\cite{FP05}.
The complexity classification of \xLBHom{H} and \xLIHom{H} is still open, although many partial results are known for both problems; we refer to the papers~\cite{AFS91,BLT11,FK02,FKP08,KPT97,KPT98,LT10} 
and to the survey by Fiala and Kratochv\'{i}l~\cite{FK08} for both \NP-complete and
polynomially solvable cases.

Instead of fixing the host graph, another natural restriction is to only take guest graphs from a special graph class. 
Heggernes et al.~\cite{HHP12} proved that \LBHom{} is {\sc Graph Isomorphism}-complete when the guest graph is chordal, and polynomial-time solvable when the guest graph is interval.
In contrast, \LSHom{} is \NP-complete when the guest graph is chordal 
and polynomial-time solvable when the guest graph is proper interval, whereas 
\LIHom{} is \NP-complete even for guest graphs that are proper interval~\cite{HHP12}.
It is also known that the problems \LBHom{} and \LSHom{} are polynomial-time solvable when the guest graph is a tree~\cite{FP10}.

In this paper we focus on the following line of research.
The {\it core} of a graph $G$ is a subgraph $F$ of $G$ such that $G\to F$ and there is no proper subgraph $F'$ of $F$ with $G\to F'$.
It is known that the core of a graph is unique up to isomorphism~\cite{HN92}. 
Dalmau, Kolaitis and Vardi~\cite{DKV02} proved that the \Hom{} problem is polynomial-time solvable when the guest graph belongs to any fixed 
class of graphs whose cores have bounded treewidth. In particular, this result implies an earlier result that \Hom{} is polynomial-time solvable when the guest graph has bounded treewidth~\cite{CR97,Fr90}. Grohe~\cite{Gr07} strengthened the result of Dalmau et al.~\cite{DKV02} by proving that under a certain complexity assumption, 
namely $\FPT\neq\W[1]$, the \Hom{} problem can be solved in polynomial time if and only if this condition holds.

It is a natural question whether the above results of Dalmau et al.~\cite{DKV02} and Grohe~\cite{Gr07} 
remain true when we consider locally constrained homomorphisms instead of general homomorphisms.
We can already conclude from known results that this is not the case for 
locally surjective homomorphisms.
Recall that \xLSHom{H} is \NP-complete even for bipartite guest graphs if $H$ contains at least one edge and is either non-bipartite or does not contain a connected component isomorphic to an edge~\cite{FP05}.
The core of every bipartite graph with at least one edge is 
an edge, and consequently, has treewidth~1. This means that bipartite graphs form a class of graphs whose cores have bounded treewidth.
Due to this negative answer, we pose the following (weaker) question instead:

\medskip
\noindent
{\it Are \LBHom, \LIHom{} and \LSHom{} polynomial-time solvable when the guest graph belongs to a class of bounded treewidth?}

\medskip
\noindent
This question is further motivated by two known results, namely that \LBHom{} and \LSHom{} can both be solved in polynomial time if the guest graph is a tree, that is, has treewidth 1~\cite{FP10}.

\subsection*{Our Contribution}
In Section~\ref{s-npcom}, we provide a negative answer to this question by showing that 
the problems \LBHom{}, \LSHom{} and \LIHom{} are \NP-complete already in the restricted case where the guest graph has pathwidth 
at most $5$, $4$ or $2$, respectively. 
We also show that the three problems are \NP-complete 
even if both the guest graph and the host graph have maximum degree~$3$.
The latter result shows that locally constrained homomorphisms problems behave more like unconstrained homomorphisms on graphs of bounded degree than on graphs of bounded treewidth, as it is known that, for example, $C_5$-{\sc Hom} is \NP-complete on subcubic graphs~\cite{GHN00}.

On the positive side, in Section~\ref{s-pol}, we show that all three problems can be solved in polynomial time if we bound the treewidth of the guest graph and at the same time bound the maximum degree of the guest graph or the host graph. 
Because a graph class of bounded maximum degree has bounded treewidth if and only if it has bounded clique-width~\cite{GW00}, 
all three problems are also polynomial-time solvable when  we bound the clique-width and the maximum degree of the guest graph.
In Section~\ref{s-pol} we also show that \LIHom{} can be solved in polynomial time when the guest graph has treewidth~1, which is best possible given the hardness result for \LIHom{} shown in Section~\ref{s-npcom}.

In Section~\ref{s-con} we state some relevant open problems.

\section{Preliminaries}

Let $G$ be a graph. 
The {\em degree} of a vertex $v$ in $G$ is denoted by $d_G(v)=|N_G(v)|$, and $\Delta(G)=\max_{v\in V_G} d_G(v)$ denotes the maximum degree of $G$. Let $\varphi$ be a homomorphism from $G$ to a graph $H$. Moreover, let $G'$ be an induced subgraph of $G$, and let $\varphi'$ be a homomorphism from $G'$ to $H$. We say that $\varphi$ {\em extends} (or, equivalently, is an {\em extension} of) $\varphi'$ if $\varphi(v)=\varphi'(v)$ for every $v\in V_{G'}$.

A \emph{tree decomposition} of $G$ is a tree $T=(V_T,E_T)$, 
where the elements of $V_T$, called the \emph{nodes} of $T$, are subsets of $V_G$ such that the following three conditions are satisfied:
\begin{itemize}
\item [1.] for each vertex $v\in V_G$, there is a node $X\in V_T$ with $v\in X$,
\item [2.] for each edge $uv\in E_G$, there is a node $X\in V_T$ with $\{u,v\}\subseteq X$,
\item [3.] for each vertex $v\in V_G$, the set of nodes $\{X \mid v\in X\}$ 
induces a connected subtree of $T$. 
\end{itemize}
The \emph{width} of a tree decomposition $T$ is the size of a largest node $X$ minus~one. The {\it treewidth} of $G$, denoted by $\tw(G)$, is the minimum width over all possible tree decompositions of $G$.
A {\it path decomposition} of $G$ is  a tree decomposition $T$ of $G$ where $T$ is a path.  
The  {\it pathwidth} of $G$ 
is the minimum width over all possible path decompositions of $G$.
By definition, the pathwidth of $G$ is at least as high as its treewidth.
A tree decomposition $T$ is {\it nice} \cite{Kloks94} if $T$ is a binary tree, rooted in a root $R$
such that the nodes of $T$ belong to one of the following four types: 
\begin{itemize}
\item [1.] a \emph{leaf node} $X$ is a leaf of $T$,
\item [2.] an \emph{introduce node} $X$ has one child $Y$ and $X=Y\cup \{v\}$ for some vertex $v\in V_G\setminus Y$,
\item [3.] a \emph{forget node} $X$ has one child $Y$ and $X=Y\setminus \{v\}$ for some vertex $v\in Y$,
\item [4.] a \emph{join node} $X$ has two children $Y,Z$ satisfying $X=Y=Z$.
\end{itemize}  

\section{NP-Completeness Results}
\label{s-npcom}

For the NP-hardness results in Theorem~\ref{t-np} below
we use a reduction from the 3-{\sc Partition} problem. 
This problem takes as input a multiset $A$ of $3m$ integers, denoted in the sequel by $\{a_1,a_2,\ldots,a_{3m}\}$,
and a positive integer $b$, such that $\frac{b}4<a_i<\frac{b}2$ for all $i\in \{1,\ldots,3m\}$ and $\sum_{1\leq i\leq 3m} a_i=mb$. 
The task is to determine whether $A$ can be partitioned into $m$ disjoint sets $A_1,\ldots,A_m$ such that $\sum_{a\in A_i} a=b$ 
for all $i\in \{1,\ldots,m\}$. Note that the restrictions on the size of each element in $A$ implies that each set $A_i$ 
in the desired partition must contain exactly three elements, which is why such a partition $A_1,\ldots,A_m$ is 
called a {\em $3$-partition} of $A$. The 3-{\sc Partition} problem is strongly \NP-complete~\cite{GareyJ79}, 
i.e., it remains \NP-complete even if the problem is encoded in unary. 

\setcounter{footnote}{0}
\begin{theorem}\label{t-np}
The following three statements hold:
\begin{itemize}
\item[(i)] \LBHom{} is \NP-complete on input pairs $(G,H)$ where $G$ has pathwidth at most~$5$ and $H$ has pathwidth at most~$3$;
\item[(ii)] \LSHom{} is \NP-complete on input pairs $(G,H)$ where $G$ has pathwidth at most~$4$ and $H$ has pathwidth at most~$3$;
\item[(iii)] \LIHom{} is \NP-complete on input pairs $(G,H)$ where $G$ has pathwidth at most $2$ and $H$ has pathwidth at most~$2$.
\end{itemize}
\end{theorem}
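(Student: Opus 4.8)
The plan is to prove all three statements by polynomial-time reductions from 3-\textsc{Partition}. Membership in \NP{} is immediate for each problem: given a candidate mapping $\varphi\colon V_G\to V_H$, one verifies in polynomial time that $\varphi$ is a homomorphism and that, for every $u\in V_G$, the induced map $\varphi_u\colon N_G(u)\to N_H(\varphi(u))$ is bijective, surjective, or injective, as required. The feature I would exploit is that 3-\textsc{Partition} is \emph{strongly} \NP-complete, so I may build gadgets whose sizes are proportional to the input integers $a_i$ without incurring a super-polynomial blow-up.

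For the reduction, from an instance $A=\{a_1,\dots,a_{3m}\}$ with target sum $b$ I would construct a host graph $H$ encoding $m$ ``buckets'', each of capacity exactly $b$, together with a guest graph $G$ built from $3m$ ``number gadgets'', where the gadget for $a_i$ contributes $a_i$ units. The intended correspondence is that a locally constrained homomorphism $G\to H$ distributes the number gadgets among the buckets, and the local constraint at the bucket interface forces each bucket to be filled to \emph{exactly} $b$; since the restrictions $\frac b4<a_i<\frac b2$ mean precisely three gadgets fit per bucket, such a homomorphism exists if and only if $A$ admits a 3-partition. The forward direction---turning a 3-partition into an explicit locally bijective, surjective, or injective homomorphism---is typically routine. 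The reverse direction, that any valid homomorphism induces a 3-partition, is where the gadget design does the work, since I must argue that \emph{every} valid homomorphism respects bucket boundaries and exact capacities.

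Because the three local constraints differ, I would use three separate constructions. For \LBHom{} the covering must be exact and degree-preserving, which forces the most elaborate bucket gadget and pushes the guest pathwidth to $5$; for \LSHom{} surjectivity is weaker, so a leaner gadget suffices (guest pathwidth $4$), but I must augment it with degree constraints to turn the natural ``at least $b$'' behaviour of surjectivity into ``exactly $b$''; for \LIHom{} injectivity admits the most linear gadget, yielding guest pathwidth $2$. In each case I would bound the pathwidth of $G$ (and of $H$) by exhibiting an explicit path decomposition: the number gadgets hang off a common backbone coming from the bucket structure, so a decomposition sweeping along this backbone need only keep a bounded number of backbone vertices together with the vertices of the gadget currently being processed in each bag.

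I expect the main obstacle to be designing each bucket interface so that the local constraint encodes ``exactly $b$'' rather than a one-sided inequality, while simultaneously keeping the guest pathwidth at the stated value. These two goals pull in opposite directions: forcing exact capacity tends to require high-degree or tightly interconnected junction vertices, whereas a small pathwidth demands that the junctions be sparse and nearly path-like. The tension is tightest for \LIHom{}, where the target pathwidth of $2$ leaves almost no room in the bucket gadget, so the number information must be threaded through the host in an essentially one-dimensional fashion.
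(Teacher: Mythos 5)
Your proposal is a strategy outline, not a proof, and the gap is exactly the part you yourself flag as the ``main obstacle'': no gadget is ever constructed, and the reverse direction (every locally constrained homomorphism induces a 3-partition) is never argued. Everything that makes the theorem true is deferred to gadget design that you describe only by its intended behaviour. In particular, your stated worry---that encoding ``exactly $b$'' seems to require high-degree, tightly interconnected junctions, which would conflict with small pathwidth---points at a real tension, but the proposal offers no mechanism to resolve it, so as written the argument cannot be completed or checked.

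For comparison, the paper resolves that tension with an idea absent from your sketch: the capacity constraint is enforced not by junction vertices inside each bucket but by the \emph{rigidity of cycles}. In all three constructions the host buckets are cycles (or paths) of length $b$, and in the guest the number $a_i$ is encoded as a marked segment of $a_i$ vertices; for \LBHom{} and \LSHom{} the guest gadgets are themselves cycles of length $b$ (with the $a_i$ marked vertices attached, via pendant vertices or directly, to a constant number of apex vertices $x,y,z$), so any homomorphism must wrap each guest cycle once around a host cycle, and ``exactly $b$'' comes for free. The global partition is then forced at a \emph{single} apex: local bijectivity (resp.\ surjectivity, injectivity) at $x$ versus $\tilde{x}$ yields a bijection between the distance-$2$ (resp.\ distance-$1$) neighborhoods, which are a disjoint union of segments of sizes $a_1,\dots,a_{3m}$ on the guest side and $m$ cycles/paths of length $b$ on the host side; a segment of length $a_i$ must map to consecutive vertices of one bucket, and the bijection forces the segments to tile the buckets exactly. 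Since the gadgets are cycles/paths hanging off a constant number of apexes, pathwidth stays at $5$, $4$, and $2$ respectively (e.g.\ for \LIHom{}: disjoint paths plus one universal vertex has pathwidth $2$). Without this---or some equivalent concrete mechanism---your proposal establishes only \NP{}-membership and the (standard) choice of 3-{\sc Partition} as the source problem.
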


\begin{proof}
First note that all three problems are in \NP. We prove each statement separately starting with statement (i).

\begin{figure}[htb]
\centering
\includegraphics[scale=.85]{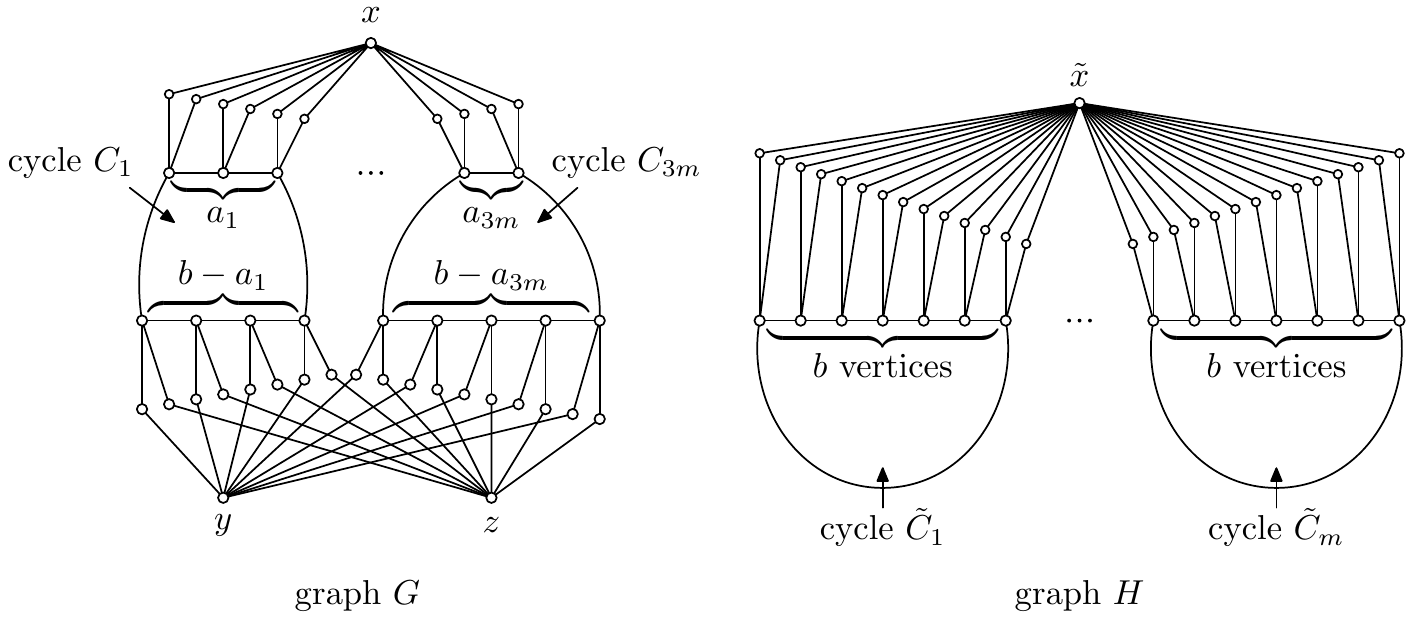}
\caption{A schematic illustration of the graphs $G$ and $H$ that are constructed from a given instance $(A,b)$ of $3$-{\sc Partition} in the proof of statement (i) in Theorem~\ref{t-np}.
See also Figure~\ref{fig:detail} for a more detailed illustration of the ``leftmost'' part of $G$ and the ``rightmost'' part of $H$, including more labels.}
\label{fig:reduction_bijective}
\end{figure}

Note that \LBHom{} is in \NP. Given an instance $(A,b)$ of 3-{\sc Partition}, we construct two graphs $G$ and $H$ as follows; 
see Figures~\ref{fig:reduction_bijective} and~\ref{fig:detail} for some helpful illustrations. 
The construction of $G$ starts by taking $3m$ disjoint cycles $C_1,\ldots,C_{3m}$ of length $b$, one for each element of $A$. 
For each $i\in \{1,\ldots,3m\}$, the vertices of $C_i$ are labeled $u^i_{1},\ldots,u^i_b$ and we add, for each $j\in \{1,\ldots,b\}$, 
two new vertices $p^i_j$ and $q^i_j$ as well as two new edges $u^i_jp^i_j$ and $u^i_jq^i_j$. 
We then add three new vertices $x$, $y$ and $z$. Vertex $x$ is made adjacent to vertices $p^i_1,p^i_2\ldots,p^i_{a_i}$ 
and $q^i_1,q^i_2\ldots,q^i_{a_i}$ for every $i\in \{1,\ldots,3m\}$. 
Finally, the vertex $y$ is made adjacent to every vertex $p^i_j$ that is not adjacent to $x$, and the vertex $z$ is made adjacent 
to every vertex $q^i_j$ that is not adjacent to $x$. This finishes the construction of $G$. 

To construct $H$, we take $m$ disjoint cycles $\tilde{C}_1,\ldots,\tilde{C}_{m}$ of length $b$, 
where the vertices of each cycle $\tilde{C}_i$ are labeled $\tilde{u}^i_1,\ldots,\tilde{u}^i_b$. 
For each $i\in \{1,\ldots,m\}$ and $j\in \{1,\ldots,b\}$, we add two vertices $\tilde{p}^i_j$ 
and $\tilde{q}^i_j$ and make both of them adjacent to $\tilde{u}^i_j$. 
Finally, we add a vertex $\tilde{x}$ and make it adjacent to each of the vertices $\tilde{p}^i_j$ and $\tilde{q}^i_j$. 
This finishes the construction of $H$.

\begin{figure}[htb]
\centering
\includegraphics[scale=.85]{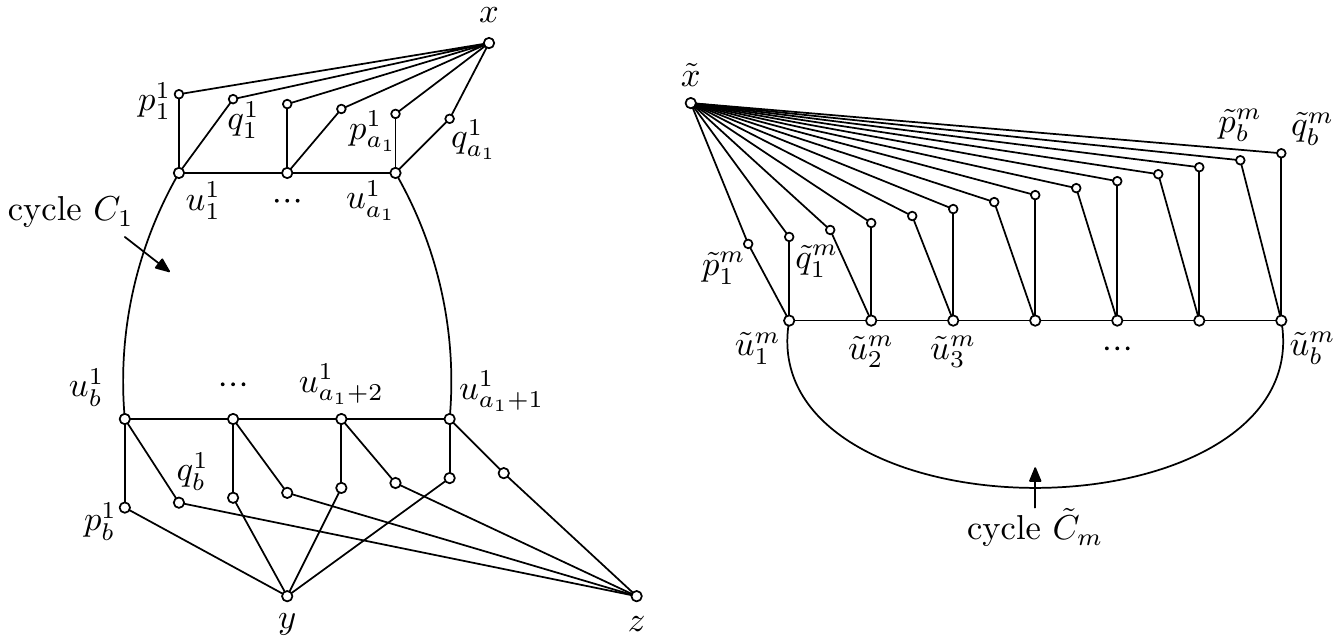}
\caption{More detailed illustration of parts of the graphs $G$ and $H$ in Figure~\ref{fig:reduction_bijective}.}
\label{fig:detail}
\end{figure}

We now show that there exists a locally bijective homomorphism from $G$ to $H$ if and only if $(A,b)$ is a yes-instance of 3-{\sc Partition}. 

Let us first assume that there exists a locally bijective homomorphism $\varphi$ from $G$ to $H$. 
Since $\varphi$ is a degree-preserving mapping, we must have $\varphi(x)=\tilde{x}$. 
Moreover, since $\varphi$ is locally bijective, the restriction of $\varphi$ to $N_G(x)$ is a bijection from $N_G(x)$ to $N_H(\tilde{x})$. 
Again using the definition of a locally bijective mapping, this time considering the neighborhoods of the vertices in $N_H(\tilde{x})$, 
we deduce that there is a bijection from the set 
$N^2_G(x):= \{u^i_j \mid 1\leq i\leq 3m, 1\le j \le a_i \}$, i.e., from the set of vertices in $G$ at distance $2$ from $x$, to the set 
$N^2_H(\tilde{x}):=\{\tilde{u}^k_j \mid 1\leq k\leq m, 1\le j \le b \}$ 
of vertices that are at distance $2$ from $\tilde{x}$ in $H$. 
For every $k\in \{1,\ldots, m\}$, we define a set $A_k\subseteq A$ such that $A_k$ contains element $a_i\in A$ if and only if 
$\varphi(u^i_1)\in \{\tilde{u}^k_1,\ldots,\tilde{u}^k_b\}$. 
Since $\varphi$ is a bijection from $N^2_G(x)$ to $N^2_H(\tilde{x})$, the sets $A_1,\ldots,A_m$ are disjoint; 
moreover each element $a_i\in A$ is contained in exactly one of them. 
Observe that the subgraph of $G$ induced by $N^2_G(x)$ is a disjoint union of $3m$ paths of lengths 
$a_1,a_2,\ldots,a_{3m}$, respectively, while the subgraph of $H$ induced by $N^2_H(\tilde{x})$ 
is a disjoint union of $m$ cycles of length $b$ each.                                                            
The fact that $\varphi$ is a homomorphism and therefore never maps adjacent vertices of $G$ to non-adjacent vertices in $H$ 
implies that $\sum_{a\in A_i} a=b$ for all $i\in \{1,\ldots,m\}$. Hence $A_1,\ldots,A_m$ is a $3$-partition of $A$.

For the reverse direction, suppose there exists a 3-partition $A_1,\ldots,A_m$ of $A$. We define a mapping $\varphi$ as follows. 
We first set $\varphi(x)=\varphi(y)=\varphi(z)=\tilde{x}$. Let $A_i=\{a_r,a_s,a_t\}$ be any set of the 3-partition. 
We map the vertices of the cycles $C_r,C_s,C_t$ that are at distance $2$ from $x$ to the vertices of the cycle $\tilde{C}_i$ in the following way:
$\varphi(u^r_j)=\tilde{u}^i_j$ for each $j\in \{1,\ldots,a_r\}$, $\varphi(u^s_j)=\tilde{u}^i_{a_r+j}$ for each $j\in \{1,\ldots,a_s\}$, and $\varphi(u^t_j)=\tilde{u}^i_{a_r+a_s+j}$ for each $j\in \{1,\ldots,a_t\}$. 
The vertices of $C_r$, $C_s$ and $C_t$ that are at distance more than $2$ from $x$ in $G$ are mapped to vertices of $\tilde{C}_i$ such that the vertices of $C_r$, $C_s$ and $C_t$ appear in the same order as their images on $\tilde{C}_i$. 
In particular, we set $\varphi(u^r_j)=\tilde{u}^i_j$ for each $j\in \{a_r+1,\ldots,b\}$; the vertices of the cycles $C_s$ and $C_t$ that are at distance more than~$2$ from $x$ are mapped to vertices of $\tilde{C}_i$ analogously. 
After the vertices of the cycles $C_1,\ldots,C_{3m}$ have been mapped in the way described above, it remains to map the vertices $p^i_j$ and $q^i_j$ for each $i\in \{1,\ldots,3m\}$ and $j\in \{1,\ldots,b\}$. 

Let $p^i_j,q^i_j$ be a pair of vertices in $G$ that are adjacent to $x$, and let $u^i_j$ be the second common neighbor of $p^i_j$ and $q^i_j$. Suppose $\tilde{u}^k_\ell$ is the image of $u^i_j$, i.e., suppose that $\varphi(u^i_j)=\tilde{u}^k_\ell$. Then we map $p^i_j$ and $q^i_j$ to $\tilde{p}^k_\ell$ and $\tilde{q}^k_\ell$, respectively. We now consider the neighbors of $y$ and $z$ in $G$. By construction, the neighborhood of $y$ consists of the $2mb$ vertices in the set 
$\{p^i_j \mid a_{i+1}\leq j \leq b\}$, while $N_G(z)=\{q^i_j \mid a_{i+1}\leq j \leq b\}$. 

Observe that $\tilde{x}$, the image of $y$ and $z$, is adjacent to two sets of $mb$ vertices: one of the form $\tilde{p}^k_\ell$, the other of the form $\tilde{q}^k_\ell$. 
Hence, we need to map half the neighbors of $y$ to vertices of the form $\tilde{p}^k_\ell$ and half the neighbors of $y$ to vertices of the form $\tilde{q}^k_\ell$ in order 
to make $\varphi$ a locally bijective homomorphism. The same should be done with the neighbors of $z$. For every vertex $\tilde{u}^k_\ell$ in $H$, we do as follows. 
By construction, exactly three vertices of $G$ are mapped to $\tilde{u}^k_\ell$, and exactly two of those vertices, say $u^i_j$ and $u^{g}_{h}$, are at distance $2$ from $y$ in $G$. 
We set $\varphi(p^i_j)=\tilde{p}^k_\ell$ and $\varphi(p^{g}_{h})=\tilde{q}^k_\ell$. 
We also set $\varphi(q^i_j)=\tilde{q}^k_\ell$ and $\varphi(q^{g}_{h})=\tilde{p}^k_\ell$.
This completes the definition of the mapping $\varphi$.

Since the mapping $\varphi$ preserves adjacencies, it clearly is a homomorphism. In order to show that $\varphi$ is locally bijective, we first observe that the degree of every vertex in $G$ is equal to the degree of its image in $H$; in particular, $d_G(x)=d_G(y)=d_G(z)=d_H(\tilde{x})=mb$. From the above description of $\varphi$ we get a bijection between the vertices of $N_H(\tilde{x})$ and the vertices of $N_G(v)$ for each $v\in \{x,y,z\}$. For every vertex $p^i_j$ that is adjacent to $x$ and $u^i_j$ in $G$, its image $\tilde{p}^k_\ell$ is adjacent to the images $\tilde{x}$ of $x$ and $\tilde{u}^k_\ell$ of $u^i_j$. For every vertex $p^i_j$ that is adjacent to $y$ (respectively $z$) and $u^i_j$ in $G$, its image $\tilde{p}^k_\ell$ or $\tilde{q}^k_\ell$ is adjacent to $\tilde{x}$ of $y$ (respectively $z$) and $\tilde{u}^k_\ell$ of $u^i_j$. Hence the restriction of $\varphi$ to $N_G(p^i_j)$ is bijective for every $i\in \{1,\ldots,3m\}$ and $j\in \{1,\ldots,b\}$, and the same clearly holds for the restriction of $\varphi$ to $N_G(q^i_j)$. The vertices of each cycle $C_i$ are mapped to the vertices of some cycle $\tilde{C}_k$ in such a way that the vertices and their images appear in the same order on the cycles. This, together with the fact that the image $\tilde{u}^k_\ell$ of every vertex $u^i_j$ is adjacent to the images $\tilde{p}^k_\ell$ and $\tilde{q}^k_\ell$ of the neighbors $p^i_j$ and $q^i_j$ of $u^i_j$, shows that the restriction of $\varphi$ to $N_G(u^i_j)$ is bijective for every $i\in \{1,\ldots,3m\}$ and $j\in \{1,\ldots,b\}$. We conclude that $\varphi$ is a locally bijective homomorphism from $G$ to $H$.

In order to show that the pathwidth of $G$ is at most~$5$, let us first consider the subgraph of $G$ depicted on the left-hand side of Figure~\ref{fig:detail}; we denote this subgraph by $L_1$, and we say that the cycle $C_1$ {\em defines} the subgraph $L_1$. The graph $L_1'$ that is obtained from $L_1$ by deleting vertices $x,y,z$ and edge $u^1_1 u^1_b$ is a caterpillar, i.e., a tree in which there is a path containing all vertices of degree more than $1$. Since caterpillars are well-known to have pathwidth $1$, graph $L_1'$ has a path decomposition 
$P_1'$ of width~$1$.
Starting with 
$P_1'$, we can now obtain a path decomposition of the graph $L_1$ by simply adding vertices $x$, $y$, $z$ and $u_1^1$ to each node of 
$P_1'$; this path decomposition has 
width~$5$. Every cycle $C_i$ in $G$ defines a subgraph $L_i$ of $G$ in the same way $C_1$ defines the subgraph $L_1$. Suppose we have constructed a path decomposition 
$P_i$ of width~$5$ of the subgraph $L_i$ for each $i\in \{1,\ldots,3m\}$ in the way described above. Since any two subgraphs $L_i$ and $L_j$ with $i\neq j$ have only the vertices $x,y,z$ in common, and these three vertices appear in all nodes of each of the path decompositions 
$P_i$, we can arrange the $3m$ path decompositions 
$P_1,\ldots,P_{3m}$ in such a way that we obtain a path decomposition 
$P$ of $G$ of width~$5$. Hence $G$ has pathwidth at most~$5$.
Similar but easier arguments can be used to show that $H$ has pathwidth
at most~$3$.

The \NP-hardness reduction for the locally bijective case can also be used to prove that \LIHom{} and \LSHom{} are \NP-hard for input pairs $(G,H)$ where $G$ has pathwidth at most~$5$ and $H$ has pathwidth at most~3.
This follows from the claim that $G\fullc H$ if and only if $G\surjc H$ if and only if $G\parc H$ for the gadget graphs $G$ and $H$ displayed in Figure~\ref{fig:reduction_bijective}. 
This claim can be seen as follows. First suppose that $G\fullc H$. Then, by definition, $G\surjc H$ and $G\parc H$.
Now suppose that $G\parc H$ or $G\surjc H$. Since it can easily be verified that
$$
\drm(G)=\drm(H)=
\begin{pmatrix}
0\; & 0\; & 2mb \\
0 & 2 & 2 \\
1 & 1 & 0 \\
\end{pmatrix},
$$
we can use Lemma~\ref{l-drm} (i) or (ii), respectively, to deduce that $G\fullc H$.
However, we can strengthen the hardness results for the locally surjective and injective cases by reducing the pathwidth of the guest graph 
to be at most
$4$ and~$2$, respectively, 
and in the latter case we can simultaneously reduce the pathwidth of the host graph to be at most~$2$, 
as claimed in statements (ii) and (iii) of Theorem~\ref{t-np}. In order to do so, we give the following alternative 
constructions below.

\begin{figure}[htb]
\centering
\includegraphics[scale=.85]{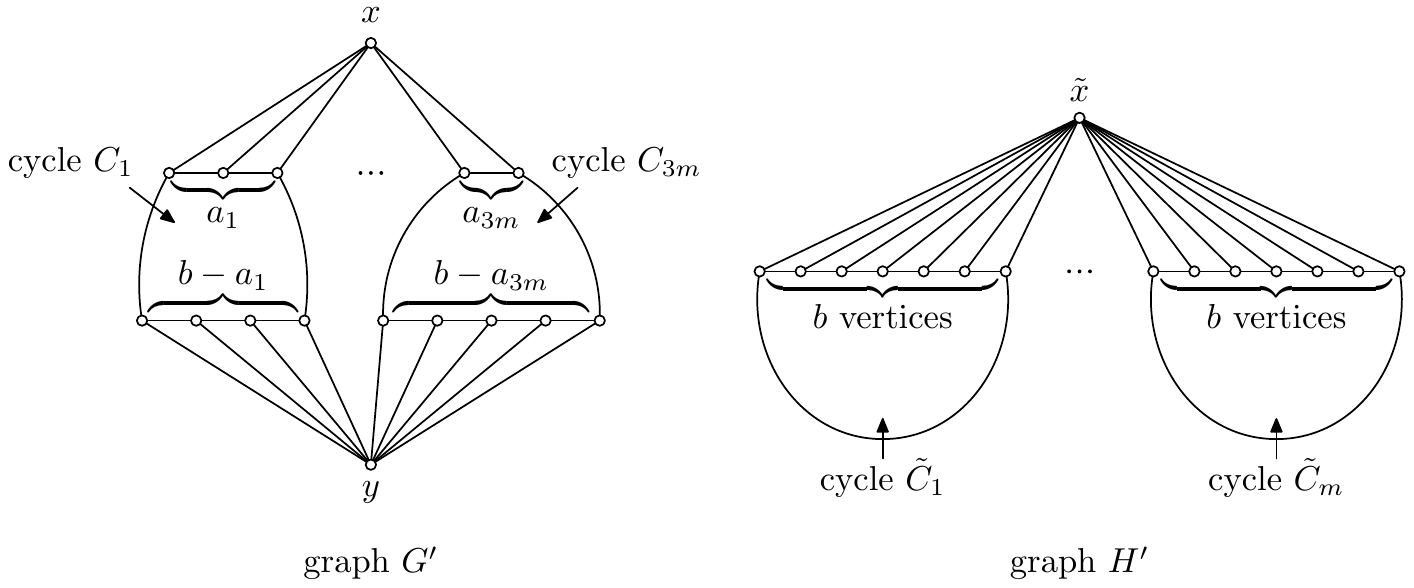}
\caption{A schematic illustration of the graphs $G'$ and $H'$ that are constructed from a given instance $(A,b)$ of $3$-{\sc Partition} in the proof of statement (ii) in Theorem~\ref{t-np}.
}
\label{fig:reduction_injective}
\end{figure}

The alternative hardness construction for \LSHom{} is similar to but easier than the construction for \LBHom{}; see Figure~\ref{fig:reduction_injective}. 
Let $(A,b)$ be an instance of $3$-{\sc Partition}. We construct a graph $G'$ by taking $3m$ disjoint cycles $C_1,\ldots,C_{3m}$ of length~$b$, and labeling the vertices of each cycle $C_i$ 
with labels $u^i_1,\ldots,u^i_B$ in the same way as we labeled the vertices of the cycles $C_i$ in the construction for \LBHom{} (see also Figure~\ref{fig:detail}). We then add two vertices $x$ and $y$. 
For every $i\in \{1,\ldots,3m\}$, we make $x$ adjacent to each of the vertices $u^i_1,u^i_2\ldots,u^i_{a_i}$, and $y$ is made adjacent to each of the vertices $u^i_{a_i+1},\ldots,u^i_B$. 
Graph $H'$ is obtained from the disjoint union of $m$ cycles $\tilde{C_1},\ldots,\tilde{C_m}$ of length $b$ by adding one universal vertex $\tilde{x}$. Using similar arguments as the ones used in the \NP-hardness proof of \LBHom{}, 
it can be shown that there exists a locally surjective homomorphism $\varphi$ from $G'$ to $H'$ if and only if $(A,b)$ is a yes-instance of $3$-{\sc Partition}. Such a homomorphism $\varphi'$ maps $x$ and $y$ to $\tilde{x}$, 
and maps the vertices of cycles $C_1,\ldots,C_{3m}$ to the vertices of cycles $\tilde{C_1},\ldots,\tilde{C_m}$ in exactly the same way as $\varphi$ mapped these vertices in the \NP-hardness proof of \LBHom{}. 
It is a routine exercise to show that $G'$ has pathwidth at most~$4$ 
and that $H'$ has pathwidth at most~$3$.

\begin{figure}[htb]
\centering
\includegraphics[scale=.85]{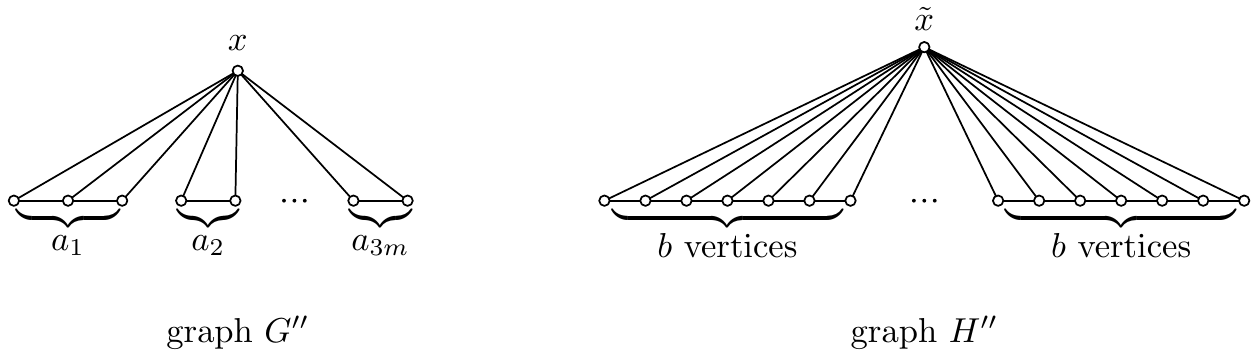}
\caption{A schematic illustration of the graphs $G''$ and $H''$ that are constructed from a given instance $(A,b)$ of $3$-{\sc Partition} in the proof of statement (iii) in Theorem~\ref{t-np}.
}
\label{fig:reduction_surjective}
\end{figure}

The reduction for \LIHom{} is even easier; see Figure~\ref{fig:reduction_surjective}. Given an instance $(A,b)$ of $3$-{\sc Partition}, we create a graph $G''$ by adding a universal vertex $x$ to the disjoint union of $3m$ paths 
on $a_1,a_2,\ldots,a_{3m}$ vertices, respectively. Graph $H''$ is obtained from the disjoint union of $m$ paths on $b$ vertices by adding a universal vertex $\tilde{x}$. It is easy to verify that there exists a locally injective 
homomorphism $\varphi''$ from $G''$ to $H''$, mapping $x$ to $\tilde{x}$ and all other vertices of $G''$ to the vertices of degree~$2$ or $3$ in $H''$, if and only if $(A,b)$ is a yes-instance of $3$-{\sc Partition}. 
The observation that both $G''$ and $H''$ have pathwidth~$2$ completes the proof of Theorem~\ref{t-np}.
\qed
\end{proof}

\medskip
We now consider the case where we bound the maximum degree of $G$ instead of the treewidth of $G$. 
We will combine some known results in order to show that bounding the maximum degree of $G$ does not yield tractability for any of our three problems \LBHom, \LIHom{} and \LSHom. We first introduce some additional terminology.
An {\it equitable partition} of a connected graph $G$ is a partition of its vertex set in blocks  $B_1,\ldots, B_k$ such that any vertex in $B_i$
has the same number $m_{i,j}$ of neighbors in $B_j$. We call the matrix $M=(m_{i,j})$ corresponding to the coarsest equitable partition of $G$
(in which the blocks are ordered in some canonical way; cf.~\cite{An80}) the {\it degree refinement matrix} of $G$, denoted as $\drm(G)$. We will use the following lemma; a proof of the first statement in this lemma can be found in the paper of Fiala and Kratochv\'{\i}l~\cite{FK02}, whereas the second statement is due to Kristiansen and Telle~\cite{KT00}.

\begin{lemma}\label{l-drm}
Let $G$ and $H$ be two graphs. Then the following two statements hold:
\begin{itemize}
\item [(i)] if $G\parc H$ and $\drm(G)=\drm(H)$, then $G\fullc H$;
\item [(ii)] if $G\surjc H$ and $\drm(G)=\drm(H)$, then $G\fullc H$.
\end{itemize}
\end{lemma}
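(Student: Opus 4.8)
The plan is to prove both statements at once by a single eigenvector-weighting argument, invoking the hypothesis $\drm(G)=\drm(H)$ only through its one spectral consequence. I may assume $G$ and $H$ are connected, since $\drm$ is defined only for connected graphs and this is the case needed in the proof of Theorem~\ref{t-np}. Because $G$ is connected, its adjacency matrix is irreducible, so by the Perron--Frobenius theorem there is a strictly positive function $\phi_G\colon V_G\to\mathbb{R}$ and a value $\rho_G>0$ with $\sum_{u'\in N_G(u)}\phi_G(u')=\rho_G\,\phi_G(u)$ for every $u\in V_G$; I define $\phi_H$ and $\rho_H$ analogously for $H$. The one place I would use $\drm(G)=\drm(H)$ is to conclude that $\rho_G=\rho_H=:\rho$: for an equitable partition the Perron eigenvector is constant on the blocks and descends to the quotient, so the spectral radius of a connected graph equals the Perron root of its degree refinement matrix, and by hypothesis these roots agree for $G$ and $H$.

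For statement (i), assume $\varphi$ witnesses $G\parc H$ and write $g(u):=\phi_H(\varphi(u))>0$. Fix $u\in V_G$. Local injectivity means the images $\{\varphi(u')\mid u'\in N_G(u)\}$ are pairwise distinct and form a subset of $N_H(\varphi(u))$, so positivity of $\phi_H$ gives $\sum_{u'\in N_G(u)}g(u')\le\sum_{v'\in N_H(\varphi(u))}\phi_H(v')=\rho\,g(u)$, with equality if and only if this subset is all of $N_H(\varphi(u))$, i.e.\ if and only if $\varphi$ is locally bijective at $u$. Hence $\beta(u):=\rho\,g(u)-\sum_{u'\in N_G(u)}g(u')\ge 0$ for every $u$. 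I would then weight by $\phi_G$ and sum over $V_G$, reindexing the double sum over ordered adjacent pairs and applying the eigenvector identity for $\phi_G$: $\sum_{u}\phi_G(u)\beta(u)=\rho\sum_u\phi_G(u)g(u)-\sum_{u'}g(u')\sum_{u\in N_G(u')}\phi_G(u)=\rho\sum_u\phi_G(u)g(u)-\rho\sum_{u'}\phi_G(u')g(u')=0$, where the final equality uses $\rho_G=\rho$. As each term $\phi_G(u)\beta(u)$ is nonnegative and their total is $0$, every $\beta(u)$ vanishes, so $\varphi$ is locally bijective and $G\fullc H$.

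Statement (ii) follows from the mirror image of this computation. If $\varphi$ witnesses $G\surjc H$ then for each $u$ the multiset $\{\varphi(u')\mid u'\in N_G(u)\}$ covers $N_H(\varphi(u))$, whence $\sum_{u'\in N_G(u)}g(u')\ge\rho\,g(u)$, with equality exactly when $\varphi$ is bijective at $u$; thus $\gamma(u):=\sum_{u'\in N_G(u)}g(u')-\rho\,g(u)\ge 0$. The identical rearrangement yields $\sum_u\phi_G(u)\gamma(u)=\rho\sum_u\phi_G(u)g(u)-\rho\sum_u\phi_G(u)g(u)=0$, forcing every $\gamma(u)=0$ and hence $G\fullc H$.

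The crux of the argument, and the only point where $\drm(G)=\drm(H)$ enters, is the equality $\rho_G=\rho_H$: without it the weighted sum telescopes instead to $(\rho_H-\rho_G)\sum_u\phi_G(u)g(u)$, which need not vanish, matching the fact that a merely locally injective or locally surjective homomorphism need not be locally bijective. I expect the main obstacle to be purely expository rather than mathematical: stating the Perron--Frobenius and equitable-partition facts at the right level of generality (in particular the identification of the spectral radius with the Perron root of $\drm$), and verifying cleanly the equality conditions in the two neighbourhood inequalities. The remaining computations are routine.
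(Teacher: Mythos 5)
Your proof is correct, but it is not the argument the paper uses: the paper gives no proof of this lemma at all, deferring to Fiala and Kratochv\'{\i}l~\cite{FK02} for statement (i) and to Kristiansen and Telle~\cite{KT00} for statement (ii). Those proofs, like the standard treatment in the literature, are combinatorial: one first shows that any locally injective (resp.\ locally surjective) homomorphism $\varphi$ between connected graphs with $\drm(G)=\drm(H)$ must respect the degree refinement partition, i.e., it maps every vertex of the $i$-th block of $G$ into the $i$-th block of $H$; then $|N_G(u)|=d_G(u)=d_H(\varphi(u))=|N_H(\varphi(u))|$ for every $u$, and an injective (resp.\ surjective) map between finite sets of equal cardinality is a bijection. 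Your spectral argument sidesteps the block-preservation step entirely: you extract from $\drm(G)=\drm(H)$ only the equality of Perron roots $\rho_G=\rho_H$ (correctly justified, since the quotient matrix of the coarsest equitable partition of a connected graph is irreducible and its Perron eigenvector lifts to a positive eigenvector of the adjacency matrix), and then the equality case of the weighted-sum identity forces local bijectivity at every vertex. Both neighbourhood inequalities, their equality conditions, and the telescoping identity $\sum_u \phi_G(u)\beta(u)=0$ check out, so the argument is sound; it treats (i) and (ii) symmetrically in a single computation, at the cost of invoking Perron--Frobenius and of restricting to connected $G$ and $H$ (harmless here, since the paper defines $\drm$ only for connected graphs, and the graphs in the application within Theorem~\ref{t-np} are connected). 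The combinatorial route buys strictly more structural information --- the fact that $\varphi$ preserves the degree partition is useful in its own right --- while yours is shorter and self-contained.
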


Kratochv\'{\i}l and K\v{r}iv\'anek~\cite{KK88} showed that \xLBHom{K_4} is \NP-complete, where $K_4$ denotes the complete graph on four vertices. 
Since a graph $G$ allows a locally bijective homomorphism to $K_4$ only if $G$ is $3$-regular, \xLBHom{K_4} is \NP-complete on $3$-regular graphs. The degree refinement matrix of a $3$-regular graph is the $1\times 1$ matrix whose only entry is~$3$. Consequently, due to Lemma~\ref{l-drm}, \xLBHom{K_4} is equivalent to \xLIHom{K_4} and to \xLSHom{K_4} on $3$-regular graphs. This yields the following result.

\begin{theorem}
\label{t-degree}
The problems \LBHom{}, \LIHom{} and \LSHom{} are \NP-complete on input pairs $(G,K_4)$ where $G$ has maximum degree~$3$.
\end{theorem}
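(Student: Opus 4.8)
The plan is to combine the cited hardness result of Kratochv\'{\i}l and K\v{r}iv\'anek~\cite{KK88} with Lemma~\ref{l-drm} to transfer \NP-hardness across all three locally constrained notions simultaneously. The backbone of the argument is the simple but crucial observation that $K_4$ is $3$-regular, so any locally bijective homomorphism to $K_4$ forces the guest graph to be $3$-regular as well. I would start by recording that \xLBHom{K_4} is \NP-complete, and then restrict attention to $3$-regular guest graphs, noting that this restriction preserves \NP-hardness because every yes-instance of \xLBHom{K_4} is automatically $3$-regular.

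The key step is to identify the degree refinement matrices. For a $3$-regular graph $G$, the coarsest equitable partition is trivial: a single block containing all vertices, each having exactly $3$ neighbors inside it, so $\drm(G)$ is the $1\times 1$ matrix $(3)$. Since $K_4$ is itself $3$-regular, $\drm(K_4)=(3)$ as well, and hence $\drm(G)=\drm(K_4)$ for every $3$-regular $G$. This equality of degree refinement matrices is exactly the hypothesis needed to invoke Lemma~\ref{l-drm}. By part~(i), a locally injective homomorphism from a $3$-regular $G$ to $K_4$ upgrades to a locally bijective one, and by part~(ii) the same holds for a locally surjective homomorphism; conversely a locally bijective homomorphism is trivially both locally injective and locally surjective. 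Therefore, \emph{on the class of $3$-regular graphs}, the three relations $G\fullc K_4$, $G\parc K_4$ and $G\surjc K_4$ all coincide.

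From this equivalence the theorem follows immediately: since \xLBHom{K_4} is \NP-complete on $3$-regular graphs, so are \xLIHom{K_4} and \xLSHom{K_4}, which means the general problems \LBHom, \LIHom{} and \LSHom{} are all \NP-complete even when restricted to input pairs $(G,K_4)$ with $\Delta(G)=3$. Membership in \NP{} for all three was already established at the start of the proof of Theorem~\ref{t-np}, so I would simply cite that.

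The main obstacle, such as it is, lies entirely in verifying that the degree refinement matrices agree---and here the obstacle is conceptual rather than technical, namely being careful to apply Lemma~\ref{l-drm} only within the $3$-regular class, where the matrices genuinely match. One subtlety worth flagging is that the cited \NP-hardness of \xLBHom{K_4} must remain \NP-hard after we impose $3$-regularity; this is not an extra hypothesis but an automatic consequence of the fact that a locally bijective homomorphism to a $3$-regular host can only exist from a $3$-regular guest, so no instances are lost. Everything else is bookkeeping that leverages the already-proved Lemma~\ref{l-drm} as a black box.
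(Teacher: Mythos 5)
Your proposal is correct and follows essentially the same route as the paper's own proof: invoke the \NP-completeness of \xLBHom{$K_4$} from~\cite{KK88}, note that yes-instances are necessarily $3$-regular so the restriction to $3$-regular guests loses nothing, observe that $\drm(G)=\drm(K_4)=(3)$ on that class, and apply Lemma~\ref{l-drm} to make the three locally constrained notions coincide. Your write-up is in fact slightly more explicit than the paper's on the two minor points it leaves implicit (that $\drm(K_4)=(3)$, and why the restriction preserves hardness), but the argument is identical.
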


Theorem~\ref{t-degree} is tight in the following sense. All three problems \LBHom{}, \LIHom{} and \LSHom{} are polynomial-time solvable on input pairs $(G,H)$ where $G$ has maximum degree at most~$2$. Moreover, the first two problems are also polynomial-time solvable on input graphs $(G,H)$ where only $H$ has maximum degree at most~$2$. This does not hold for the \LSHom{}, as \xLSHom{K_3} is \NP-complete~\cite{KT00}.

\section{Polynomial-Time Results}
\label{s-pol}

In Section~\ref{s-npcom}, we showed that \LBHom{}, \LIHom{} and \LSHom{} are \NP-complete when either the treewidth or the maximum degree of the guest graph is bounded. In this section, we show that all three problems become polynomial-time solvable if we bound both the treewidth and the maximum degree of $G$. For the problems \LBHom{} and \LIHom{}, our polyno\-mial-time result follows from
reformulating these problems as constraint satisfaction problems and applying a result of Dalmau et al.~\cite{DKV02}.
In order to explain this, we need some additional terminology.

A relational structure $(A,R_1,\dots,R_k)$ is a finite set $A$, 
called the {\em base set}, together with a collection of relations $R_1,\dots,R_k$. The arities
of these relations determine the vocabulary of the structure.
A homomorphism between two relational structures of the same vocabulary is a mapping between the base sets such that 
all the relations are preserved.

Fiala and Kratochv\'{\i}l~\cite{FK06} observed that locally injective and locally bijective homomorphisms between graphs can be expressed as homomorphisms between relational structures as follows. A locally injective homomorphism $f:G\to H$ can be expressed as a homomorphism between relational structures 
$(V_G,E_G,E'_G)$ and $(V_H,E_H,E'_H)$, where the new binary relation $E'$ consists of pairs of distinct vertices that have at least one common neighbor.
Since $f$ maps distinct neighbors of a vertex $v$ to distinct neighbors of $f(v)$, we get that $f$ is a homomorphism of the associated relational structures. On the other hand, 
if $(V_G,E_G,E'_G)$ and $(V_H,E_H,E'_H)$ are constructed from $G$ and $H$ as described above, and if $f$ is a homomorphism between them, then the relations $E'_G$ 
and $E_H'$ guarantee that no two vertices with a common neighbor 
in $G$ are mapped to the same target in $H$.
In other words, $f$ is a locally injective homomorphism between the graphs $G$ and $H$. 
An analogous construction works for locally bijective homomorphisms. Here, we need to express $G$ using two binary relations $E_G$ and $E'_G$ as above, together with $\Delta(G)+1$ unary relations. A unary relation can be viewed as a set: here, the $i$-th set will consists of all vertices of degree~$i-1$. These unary relations guarantee that degrees are preserved, and consequently that the associated graph homomorphisms are locally bijective.

The \emph{Gaifman graph} $\cG_{\cal A}$ of a relational structure ${\cal A}=(A,R_1,\dots,R_k)$ is 
the graph with vertex set $A$, where 
any two distinct vertices $u$ and $v$
are joined by an edge if they are bound by some relation. Formally
$u,v\in E_{\cG_{\cal A}}$ if and only if for some relation $R_i$ of arity $r$ and $(a_1,\dots,a_r )\in R_i$ 
it holds that $\{u,v\}\subseteq \{a_1,\dots.a_r\}$.

As a direct consequence of a result of Dalmau et al.~\cite{DKV02}, the existence of a homomorphism 
between two relational structures ${\cal A}$ and ${\cal B}$  can be decided in polynomial time if the treewidth of $\cG_{\cal A}$ is bounded by a constant. 
This leads to Theorem~\ref{thm:bounded} below. 

\begin{theorem}\label{thm:bounded}
The problems \LBHom{} and \LIHom{} can be solved in polynomial time when $G$ has bounded treewidth and $G$ or $H$ has bounded maximum degree.
\end{theorem}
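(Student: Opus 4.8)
The plan is to reduce both claims to the single case in which $G$ has bounded treewidth \emph{and} bounded maximum degree, and then to invoke the relational-structure reformulation recalled above together with the result of Dalmau et al.~\cite{DKV02}. First I would dispose of the situation in which only $H$ has bounded maximum degree, say $\Delta(H)\leq d$. Every locally bijective and every locally injective homomorphism $\varphi$ satisfies $d_G(v)\leq d_H(\varphi(v))$ for all $v\in V_G$ (with equality in the bijective case), because the restriction of $\varphi$ to $N_G(v)$ is an injection into $N_H(\varphi(v))$. Hence if $\Delta(G)>d$ we may immediately output \textsc{no}, since no such homomorphism can exist; otherwise $\Delta(G)\leq d$ is bounded and we have reduced to the case in which $G$ itself has bounded maximum degree. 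It therefore suffices to solve that case.

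So assume now that $G$ has both bounded treewidth and bounded maximum degree. I would build the relational structures ${\cal A}$ from $G$ and ${\cal B}$ from $H$ exactly as described above: for \LIHom{} the two binary relations $E_G,E'_G$ (and $E_H,E'_H$), and for \LBHom{} in addition the $\Delta(G)+1$ unary relations recording degrees. This vocabulary has bounded size precisely because $\Delta(G)$ is bounded, so ${\cal A}$ and ${\cal B}$ are constructible in polynomial time, and by the Fiala--Kratochv\'{\i}l correspondence $\varphi$ is a locally injective (respectively bijective) homomorphism $G\to H$ if and only if it is a homomorphism ${\cal A}\to{\cal B}$. The crucial observation is that the Gaifman graph $\cG_{\cal A}$ equals the square $G^2$ of $G$: the unary relations contribute no edges, $E_G$ contributes exactly the pairs at distance $1$, and $E'_G$ adds the pairs sharing a common neighbor, so together with $E_G$ we obtain precisely the set of pairs at distance at most $2$, i.e.\ the edge set of $G^2$.

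The main point, and the one place where the degree bound is essential, is to show that $\tw(G^2)$ is bounded whenever $\tw(G)$ and $\Delta(G)$ are bounded. I would prove this by the standard bag-inflation argument: starting from a tree decomposition $(T,\{X_i\})$ of $G$ of width $\tw(G)$, replace each node $X_i$ by $X_i':=\bigcup_{v\in X_i}N_G[v]$, the union of the closed neighborhoods $N_G[v]:=N_G(v)\cup\{v\}$ of its vertices. Each $X_i'$ has size at most $(\tw(G)+1)(\Delta(G)+1)$, so the width is bounded; it then remains to verify the three tree-decomposition axioms for $G^2$. Coverage of a distance-$2$ edge $uv$ holds because any node containing a common neighbor $w$ of $u$ and $v$ also contains $u$ and $v$ in its inflation, as $u,v\in N_G[w]$; and the connectivity axiom holds because for any neighbor $z$ of $v$ the edge $vz$ forces the $v$-subtree and the $z$-subtree of $T$ to meet, so the nodes whose inflation contains $v$ form a connected subtree. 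With $\tw(\cG_{\cal A})=\tw(G^2)$ bounded, the result of Dalmau et al.\ quoted above decides the existence of a homomorphism ${\cal A}\to{\cal B}$, and hence the existence of the desired locally constrained homomorphism $G\to H$, in polynomial time, which completes the proof.
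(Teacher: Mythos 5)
Your proposal is correct and follows essentially the same route as the paper: reduce the case of bounded $\Delta(H)$ to bounded $\Delta(G)$ via the inequality $d_G(v)\leq d_H(\varphi(v))$, use the Fiala--Kratochv\'{\i}l relational-structure encoding, observe that the Gaifman graph is $G^2$, bound $\tw(G^2)$ by inflating each bag with the neighborhoods of its vertices, and invoke the result of Dalmau et al.~\cite{DKV02}. The only differences are cosmetic: you carry out the reduction between the two degree-bounded cases at the start rather than at the end, and you spell out the verification of the tree-decomposition axioms for $G^2$, which the paper leaves implicit.
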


\begin{proof}
First suppose that $G$ has bounded treewidth and bounded maximum degree. 
Observe that for locally injective and locally bijective homomorphisms, the Gaifman graph $\cG_{\cal A}$ is isomorphic to $G^2$, which is the graph arising from $G$ by adding an edge between any two vertices at distance~$2$. It suffices to observe that $\tw(G^2)\leq \Delta(G)(\tw(G)+1)-1$, as we can transform any tree decomposition $T$ of $G$ of width $\tw(G)$ into a desired tree decomposition of $G^2$ by adding to each node $X$ of $T$ all the neighbors of every vertex from $X$.
Since $G \parc H$ implies that $\Delta(H)\ge \Delta(G)$, the theorem also holds if we bound the maximum degree of $H$ instead of~$G$.
\qed
\end{proof}

To our knowledge, locally surjective homomorphisms have not yet been expressed as homomorphisms between relational structures.
Hence, in the proof of Theorem~\ref{t-bounded} below, we present a polynomial-time algorithm for \LSHom{} when $G$ has bounded treewidth and bounded maximum degree. We first introduce some additional terminology. 

Let $\varphi$ be a locally surjective homomorphism from $G$ to $H$. 
Let $v\in V_G$ and $p\in V_H$. If $\varphi(v)=p$, i.e., if $\varphi$ maps vertex $v$ to color $p$, then we say that $p$ is {\em assigned} to $v$. By definition, for every vertex $v\in V_G$, the set of colors that are assigned to the neighbors of $v$ in $G$ is exactly the neighborhood of $\varphi(v)$ in $H$. Now suppose we are given a homomorphism $\varphi'$ from an induced subgraph $G'$ of $G$ to $H$. For any vertex $v\in V_{G'}$, we say that $v$ {\em misses} a color $p\in V_H$ if $p\in N_H(\varphi'(v))\setminus \varphi(N_{G'}(v))$, i.e., if $\varphi'$ does not assign $p$ to any neighbor of $v$ in $G'$, but any locally surjective homomorphism $\varphi$ from $G$ to $H$ that extends $\varphi'$ assigns $p$ to some neighbor of $v$ in~$G'$.

Let $T$ be a nice tree decomposition of $G$ rooted in $R$. For every node $X\in V_T$, we define $G_X$ to be the subgraph of $G$ induced by the vertices of $X$ together with the vertices of all the nodes that are descendants of $X$. 
In particular, we have $G_R=G$.

\begin{definition}
\label{def:feasible}
Let  $X\in V_T$, and let $c:X\rightarrow V_H$ and $\mu:X\rightarrow 2^{V_H}$ be two mappings. The pair $(c,\mu)$ is {\em feasible} for $G_X$ if there exists a homomorphism $\varphi$ from $G_X$ to $H$ satisfying the following three conditions:
\begin{itemize}
\item[(i)] $c(v)=\varphi(v)$ for every $v\in X$;
\item[(ii)] $\mu(v)=N_H(\varphi(v))\setminus \varphi(N_{G_X}(v))$ for every $v\in X$;
\item[(iii)] $\varphi(N_G(v))=N_H(\varphi(v))$ for every $v\in V_{G_X}\setminus X$.
\end{itemize}
\end{definition}

In other words, a pair $(c,\mu)$ consists of a coloring $c$ of the vertices of $X$, together with a collection of sets $\mu(v)$, one for each $v\in X$, consisting of exactly those colors that $v$ misses. 
Informally speaking, a pair $(c,\mu)$ is feasible for $G_X$ if there is a homomorphism $\varphi: G_X\to H$ 
such that $\varphi$ ``agrees'' with the coloring $c$ on the set $X$, and such that none of the vertices in $V_{G_X}\setminus X$ misses any color. The idea is that if a pair $(c,\mu)$ is feasible, then 
such a homomorphism $\varphi$ might have an extension $\varphi^*$ that is a locally surjective homomorphism from $G$ to $H$. After all, for any vertex $v\in X$ that misses a color when considering $\varphi$, this color might be assigned by $\varphi^*$ to a neighbor of $v$ in the set $V_G\setminus V_{G_X}$.

We now prove a result for \LSHom{} similar to Theorem~\ref{thm:bounded}. 

\begin{theorem}\label{t-bounded}
The problem \LSHom{} can be solved in polynomial time when $G$ has bounded treewidth and $G$ or $H$ has bounded maximum degree.
\end{theorem}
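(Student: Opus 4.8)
The plan is to solve \LSHom{} by bottom-up dynamic programming over a nice tree decomposition $T$ of $G$, computing at each node $X$ the set of all feasible pairs $(c,\mu)$ for $G_X$ in the sense of Definition~\ref{def:feasible}. Before doing so, I would reduce the case where only $\Delta(G)$ is bounded to the case where $\Delta(H)$ is bounded, as follows. If $\varphi$ is any locally surjective homomorphism from $G$ to $H$, then for every $v$ the restriction $\varphi|_{N_G(v)}$ is surjective onto $N_H(\varphi(v))$, so $d_H(\varphi(v))\le d_G(v)\le \Delta(G)$; moreover every neighbor of an image vertex is again an image vertex, so $\varphi(V_G)$ is a union of connected components of $H$, each of maximum degree at most $\Delta(G)$. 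Hence any component of $H$ containing a vertex of degree greater than $\Delta(G)$ can be discarded without affecting the existence of a locally surjective homomorphism, and after this preprocessing $\Delta(H)\le\Delta(G)$. It therefore suffices to give the algorithm under the assumption that both $\tw(G)$ and $\Delta(H)$ are bounded by constants.

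The core of the algorithm is the recurrence for the four node types. For a leaf node $X$ I would enumerate all homomorphisms $c$ from $G[X]$ to $H$ and record for each the induced miss-sets $\mu(v)=N_H(c(v))\setminus c(N_{G[X]}(v))$. For an introduce node $X=Y\cup\{v\}$, I would extend each feasible pair of the child by every valid color $c(v)=p$ (valid meaning $p$ is adjacent in $H$ to $c(u)$ for all bag-neighbors $u$ of $v$), set $\mu(v)=N_H(p)\setminus\{c(u):u\in N_{G_X}(v)\}$, and for each old bag-neighbor $u$ of $v$ delete $p$ from $\mu(u)$. For a forget node $X=Y\setminus\{v\}$ I would keep only those child pairs with $\mu(v)=\emptyset$ --- since all neighbors of $v$ already occur in $G_Y$, this is exactly condition (iii) of Definition~\ref{def:feasible} becoming satisfied for $v$ --- and then restrict $c,\mu$ to $X$. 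For a join node $X=Y=Z$ I would combine pairs with identical colorings $c$ and take $\mu(v)=\mu_Y(v)\cap\mu_Z(v)$, since a color is still missed at $v$ in $G_X$ precisely when it is missed on both sides. Finally, $G$ admits a locally surjective homomorphism if and only if the root $R$ carries some feasible pair with $\mu\equiv\emptyset$. Correctness follows by induction over $T$, showing that the set computed at each node coincides with the set of feasible pairs for $G_X$.

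For the running time, the key observation --- and the step I expect to be the main obstacle --- is that the number of feasible pairs stored per node is polynomial. A priori a map $\mu:X\to 2^{V_H}$ ranges over $2^{|V_H||X|}$ possibilities, which is exponential; what saves us is condition (ii), which forces $\mu(v)\subseteq N_H(c(v))$, so that $|\mu(v)|\le\Delta(H)$. Together with $|X|\le\tw(G)+1$, the number of pairs per node is at most $|V_H|^{\tw(G)+1}\cdot 2^{\Delta(H)(\tw(G)+1)}$, which is polynomial once $\tw(G)$ and $\Delta(H)$ are constant. Since a nice tree decomposition has polynomially many nodes and each leaf, introduce, forget and join step processes its tables in polynomial time (the join step, which pairs compatible tables, being the most expensive), the whole algorithm runs in polynomial time. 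The conceptual crux is thus the twin reductions that keep the dynamic-programming tables small: bounding $\Delta(H)$ from $\Delta(G)$ via the component-and-degree argument, and bounding $|\mu(v)|$ via the constraint $\mu(v)\subseteq N_H(c(v))$.
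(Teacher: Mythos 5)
Your proposal is correct and follows essentially the same route as the paper: a bottom-up dynamic program over a nice tree decomposition that stores exactly the feasible pairs $(c,\mu)$ of Definition~\ref{def:feasible}, with the same recurrences for leaf, introduce, forget and join nodes, the same table-size bound $|V_H|^{\tw(G)+1}2^{\Delta(H)(\tw(G)+1)}$ coming from $\mu(v)\subseteq N_H(c(v))$, and the same reduction between the two degree-boundedness cases (the paper phrases it as $G\surjc H$ implying $\Delta(G)\ge\Delta(H)$ for connected graphs, while you discard overly large components of $H$; these are equivalent). No gaps.
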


\begin{proof}
Let $(G,H)$ be an instance of \LSHom{} such that the treewidth of the guest graph $G$ is bounded. Throughout the proof, we assume that the maximum degree of $H$ is bounded, and show that the problem can be solved in polynomial time under these restrictions. Since $G \surjc H$ implies that $\Delta(G)\ge \Delta(H)$, our polynomial-time result applies also if we bound the maximum degree of $G$ instead of~$H$.

We may assume without loss of generality that both $G$ and $H$ are connected, as otherwise we just consider all pairs $(G_i,H_j)$ separately, where $G_i$ is a connected component of $G$ and $H_j$ is a connected component of $H$.
Because $G$ has bounded treewidth, we can compute a tree decomposition of $G$ of width $\tw(G)$ in linear time using Bodlaender's algorithm~\cite{Bo96}. We transform this tree decomposition into a nice tree decomposition $T$ of $G$ with width $\tw(G)$
with at most $4|V_G|$ nodes using the linear-time algorithm of Kloks~\cite{Kloks94}. 
Let $R$ be the root of $T$ and let $k=\tw(G)+1$.

For each node $X\in V_T$, let $F_X$ be the set of all feasible pairs $(c,\mu)$ for $G_X$. For every feasible pair $(c,\mu)\in F_X$ and every $v\in X$, it holds that $\mu(v)$ is a subset of $N_H(c(v))$. Since $|X|\leq k$ and $|N_H(c(v))|\leq \Delta(H) k$ for every $v\in X$ and every mapping $c:X\rightarrow V_H$, this implies that $|F_X|\le |V_H|^{k} 2^{\Delta(H) k}$ for each $X\in V_T$. As we assumed that both $k$ and $\Delta(H)$ are bounded by a constant, the set $F_X$ is of polynomial size with respect to $|V_H|$. 

The algorithm considers the nodes of $T$ in a bottom-up manner, starting with the leaves of $T$ and processing a node $X\in V_T$ only after its children have been processed. For every node $X$, the algorithm computes the set $F_X$ in the way described below. We distinguish between four different cases. The correctness of each of the cases easily follows from the definition of a locally surjective homomorphism and Definition~\ref{def:feasible}.

\begin{enumerate}
\item {\em $X$ is a leaf node of $T$.} 
We consider all mappings $c:X\rightarrow V_H$. For each mapping $c$, we check whether $c$ is a homomorphism from $G_X$ to $H$. If not, then we discard $c$, as it can not belong to a feasible pair due to condition (i) in Definition~\ref{def:feasible}. For each mapping $c$ that is not discarded, we compute the unique mapping $\mu$ satisfying $\mu(v)=N_H(c(v))\setminus c(N_{G_X}(v))$ for each $v\in X$, and we add the pair $(c,\mu)$ to $F_X$. It follows from condition (ii) that the obtained set $F_X$ indeed contains all feasible pairs for $G_X$. As there is no vertex in $V_{G_X}\setminus X$, every pair $(c,\mu)$ trivially satisfies condition (iii). The computation of $F_X$ can be done in $O(|V_H|^{k} k (\Delta(H)+k))$ time in this case.

\item {\em $X$ is a forget node.} Let $Y$ be the child of $X$ in $T$, and let $\{u\}=Y\setminus X$. Observe that $(c,\mu)\in F_X$ if 
and only if there exists a feasible pair $(c',\mu')\in F_Y$ such that $c(v)=c'(v)$ and $\mu(v)=\mu'(v)$ for every $v\in X$, and $\mu'(u)=\emptyset$.
Hence we examine each $(c',\mu')\in F_Y$ and check whether $\mu'(u)=\emptyset$ is satisfied. If so, we first restrict $(c',\mu')$ on $X$ to get $(c,\mu)$ and then we insert the obtained feasible pair into $F_X$.
This procedure needs $O(|F_Y| k\Delta(H))$ time in total.

\item {\em $X$ is an introduce node.} Let $Y$ be the child of $X$ in $T$, and let $\{u\} =X\setminus Y$. Observe that $(c,\mu)\in F_X$ if 
and only if there exists a feasible pair $(c',\mu')\in F_Y$ such that, for every $v\in Y$, it holds that $c(v)=c'(v)$, $\mu(v)=\mu'(v)\setminus c(u)$ if $uv\in E_G$, and $\mu(v)=\mu'(v)$ if $uv\notin E_G$. Hence, for each $(c',\mu')\in F_Y$, we consider all $|V_H|$ mappings $c:X\rightarrow V_H$ that extend $c'$. 
For each such extension $c$, we test whether $c$ is a homomorphism from $G_X$ to $H$ by checking the adjacencies of $c(u)$ in $H$. 
If not, then we may safely discard $c$ due to condition (i) in Definition~\ref{def:feasible}. Otherwise, we compute the unique mapping $\mu:X\rightarrow 2^{V_H}$ satisfying
$$
\mu(v)=\begin{cases}
N_H(c(u))\setminus c(N_{G_X}(u)) & \text{if } v=u \\
\mu'(v)\setminus c(u) & \text{if $v\neq u$ and $uv\in E_G$}\\
\mu'(v) & \text{if $v\neq u$ and $uv\notin E_G$} \; ,
\end{cases}
$$
and we add the pair $(c,\mu)$ to $F_X$; due to condition (ii), this pair $(c,\mu)$ is the unique feasible pair containing $c$. Computing the set $F_X$ takes at most 
$O(|F_Y||V_H| k\Delta(H))$ time in total.

\item {\em $X$ is a join node.} Let $Y$ and $Z$ be the two children of $X$ in $T$. 
Observe that $(c,\mu)\in F_X$ if 
and only if there exist 
feasible pairs $(c_1,\mu_1)\in F_Y$ and $(c_2,\mu_2)\in F_Z$ such that, for every $v\in X$,  
$c(v)=c_1(v)=c_2(v)$ and $\mu(v) =\mu_1(v)\cap \mu_2(v)$.
Hence the algorithm considers every combination of $(c_1,\mu_1)\in F_Y$ with $(c_2,\mu_2)\in F_Z$ 
and if they agree on the first component $c$, the other component $\mu$ is determined uniquely by taking the intersection of $\mu_1(v)$ and $\mu_2(v)$ for every $v\in X$.
This procedure computes the set $F_X$ in $O(|F_Y||F_Z| k\Delta(H))$ time in total.
\end{enumerate}

Finally, observe that a locally surjective homomorphism from $G$ to $H$ exists if and only if there exists a feasible pair $(c,\mu)$ for $G_R$ such that $\mu(v)=\emptyset$ for all $v\in R$.
Since $T$ has at most $4|V_G|$ nodes, we obtain a total running time of 
$O(|V_G| (|V_H|^{k} 2^{\Delta(H)k})^2 k\Delta(H))$.
As we assumed that both $k=\tw(G)+1$ and $\Delta(H)$ are bounded by a constant, our algorithm runs in polynomial time.
\qed
\end{proof}

Note that Theorem~\ref{thm:bounded} can be derived by solving \LIHom{} using a dynamic programming approach that strongly resembles the one for \LSHom{} described in the proof of Theorem~\ref{t-bounded}, 
together with the fact that $(G,H)$ is a yes-instance of \LBHom{} if and only if it is a yes-instance for both \LIHom{} and \LSHom{}. In a dynamic programming algorithm for solving \LIHom{}, instead of keeping track of sets $\mu(v)$ of colors that a vertex $v\in X$ is missing, we keep track of sets $\alpha(v)$ of colors that have already been assigned to the neighbors of a vertex $v\in X$. This is because in a locally injective homomorphism from $G$ to $H$, no color may be assigned to more than one neighbor of any vertex. 
In this way we can adjust Definition~\ref{def:feasible} in such a way that it works for locally injective instead of locally surjective homomorphisms. We omit further details, but we expect that 
a dynamic programming algorithm of this kind 
will have smaller hidden constants in the running time estimate  than the more general method of Dalmau et al.~\cite{DKV02}.

We conclude this section with one more polynomial-time result.
It is known that the problems \LBHom{} and \LSHom{} are polynomial-time solvable when $G$ is a tree~\cite{FP10}, and consequently when $G$ has treewidth~$1$. We claim that the same holds for the \LIHom{} problem.

\begin{theorem}
\label{t-polytree}
The \LIHom{} problem can be solved in polynomial time when $G$ has treewidth~$1$.
\end{theorem}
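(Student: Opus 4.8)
Since a graph of treewidth $1$ is a forest, and a locally injective homomorphism of $G$ exists if and only if one exists for each connected component of $G$, I would assume $G$ is a tree and treat the components independently. The plan is to root $G$ at an arbitrary vertex $r$ and run a bottom-up dynamic program over the rooted tree. The essential difficulty, compared with the bounded-degree algorithm of Theorem~\ref{t-bounded}, is that a tree may have vertices of arbitrarily large degree, so we cannot enumerate the behaviour of a whole neighbourhood. The key observation is that the local injectivity constraint at a vertex $v$ mapped to a colour $p$ is exactly the requirement that the children of $v$ receive pairwise distinct colours from $N_H(p)$, with one further colour of $N_H(p)$ reserved for the (not yet coloured) parent of $v$. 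Deciding whether such an assignment exists is a bipartite matching problem, which is solvable in polynomial time even when the degree is large.

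Concretely, for each non-root vertex $v$ and each colour $p\in V_H$ I would record the set $Q_{v,p}=\{q\in N_H(p)\mid \mathrm{feas}(v,p,q)\}$, where $\mathrm{feas}(v,p,q)$ is true precisely when the subtree $T_v$ rooted at $v$ admits a homomorphism $\psi$ to $H$ with $\psi(v)=p$ that is locally injective at every vertex of $T_v$ and in which no child of $v$ is mapped to $q$; here $q$ is interpreted as the colour the parent of $v$ will later receive, so $Q_{v,p}$ is the set of colours that can be \emph{spared} for the parent. The recursion is then: $q\in Q_{v,p}$ if and only if the bipartite graph $B_{v,p}$, with the children $c_1,\dots,c_d$ of $v$ on one side, the colours $N_H(p)\setminus\{q\}$ on the other, and an edge $c_iw$ whenever $w\in N_H(p)$ and $p\in Q_{c_i,w}$, admits a matching saturating all children. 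A left-saturating matching here simultaneously guarantees that the children receive distinct colours (local injectivity at $v$, together with the reserved colour $q$ for the parent) and that each chosen colour $w$ is consistent with a valid colouring of the corresponding child's subtree. The base case is a leaf $v$ of $G$: it has no children, local injectivity at $v$ is vacuous, and every colour can be spared, so $Q_{v,p}=N_H(p)$ for all $p$.

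At the root $r$ there is no parent to reserve a colour for, so I would instead test, for each colour $p$, whether the bipartite graph with the children of $r$ on one side and the full set $N_H(p)$ on the other (using the same edge rule) has a matching saturating all children; we then have $G\parc H$ if and only if this succeeds for some $p$. For the running time, note that for a fixed pair $(v,p)$ one may simply try every candidate $q\in N_H(p)$ and run a single bipartite matching on $B_{v,p}$, which sidesteps any delicate matching-theoretic analysis of exactly which colours are sparable. Since the left side of $B_{v,p}$ has $d_G(v)$ vertices and $\sum_{v}d_G(v)=2|E_G|<2|V_G|$, summing the matching costs over all vertices $v$, colours $p$, and excluded colours $q$ yields a bound polynomial in $|V_G|+|V_H|$, so the whole algorithm runs in polynomial time.

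The main obstacle is conceptual rather than computational: one must recognise that bounding the treewidth of the guest graph does \emph{not} bound its degree, so the naive state space that enumerates the images of an entire neighbourhood is exponentially large, and that the local injectivity requirement at each vertex is precisely a system-of-distinct-representatives condition reducible to bipartite matching. The only subtlety in the correctness argument is the bookkeeping of the single colour reserved for the parent, formalised through the sets $Q_{v,p}$, which is exactly what guarantees that the local injectivity constraint at $v$ (which involves both the children and the parent of $v$) is respected once the matchings computed at $v$ and at its parent are combined.
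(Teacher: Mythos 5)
Your proof is correct, but it takes a genuinely different route from the paper. The paper gives no direct algorithm at all: it reduces the problem to a known result on universal covers. Specifically, it uses the facts that $T_G=G$ when $G$ is a tree, that $G\parc H$ implies $T_G\parc T_H$~\cite{FM06}, and that composing a locally injective homomorphism $G\parc T_H$ with the covering projection $T_H\fullc H$ yields $G\parc H$; hence $G\parc H$ if and only if $T_G\parc T_H$, which is decidable in polynomial time by the universal-cover-comparison algorithm of Fiala and Paulusma~\cite{FP10}. Your argument is instead self-contained: a bottom-up dynamic program whose state $(v,p,q)$ records whether the subtree $T_v$ can be mapped with $v\mapsto p$ while sparing the colour $q\in N_H(p)$ for the parent, with transitions computed by left-saturating bipartite matchings. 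Your bookkeeping is sound: for every $u\neq v$ in $T_v$ the entire $G$-neighbourhood of $u$ lies inside $T_v$, so local injectivity needs special treatment only at $v$ itself, and that is exactly what the reserved colour $q$ and the matching into $N_H(p)\setminus\{q\}$ handle (in particular, the edge rule $p\in Q_{c_i,w}$ correctly forbids any grandchild of $v$ through $c_i$ from reusing the colour $p$ of $v$); trying each candidate $q$ separately avoids any delicate characterization of the sparable colours at only polynomial overhead. What the paper's route buys is brevity and a conceptual link to the universal-cover machinery that also underlies the tree algorithms for \LBHom{} and \LSHom{}; what yours buys is an explicit, elementary algorithm with a concrete polynomial bound that moreover constructs a witnessing homomorphism rather than merely deciding existence. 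It is worth noting that your matching technique is close in spirit to what happens inside the black box you avoid: the algorithm of~\cite{FP10} for comparing universal covers is itself driven by systems-of-distinct-representatives conditions solved via bipartite matching.
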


\begin{proof}
Let us first state some terminology and useful known results.
The universal cover $T_G$ of a connected
graph $G$ is the unique tree (which may have an infinite number of vertices) 
such that there is a locally bijective homomorphism from $T_G$ to $G$.
One way to define this mapping is as follows.
Consider all finite walks in $G$ that start from an arbitrary fixed vertex in $G$ and that do not
traverse the same edge in two consecutive steps. Each such walk will correspond to a vertex of $T_G$.
We let two vertices of $T_G$ be adjacent
if and only if one can be obtained from the other by deleting the last vertex of the walk.
Then the mapping $f_G$  that maps every walk to its last vertex is a locally bijective homomorphism from $T_G$ to $G$~\cite{An80}.
It is also known that $T_G=G$ if and only if $G$ is a tree~\cite{An80}. Moreover, for any two graphs $G$ and $H$, $G\parc H$ implies that $T_G\parc T_H$~\cite{FM06}.

Now let $(G,H)$ be an instance of \LIHom{} where $G$ has treewidth~$1$. We assume, without loss of generality, that both $G$ and $H$ are connected. In particular, $G$ is a tree. We claim that $G\parc H$ if and only if $T_G\parc T_H$. The forward implication follows from above. To show the backward implication, suppose that $T_G\parc T_H$. Then $G\parc T_H$, because $T_G=G$. 
Let $f$ be a locally injective homomorphism from $G$ to $T_H$. Then, because $G\parc T_H$ and $T_H\fullc H$, we have $G\parc H$. To explain this, consider
the mapping $f':V_G\to V_H$ defined by $f'(u)=f_H(x)$ if and only if $f(u)=x$. Notice that $f'$ is a locally injective homomorphism from $G$ to $H$.
The desired result follows from this claim combined with the fact that we can check in polynomial time whether $T_G\parc T_H$ holds for two graphs $G$ and $H$~\cite{FP10}.
\qed
\end{proof}

\section{Conclusion}\label{s-con}

Theorem~\ref{t-polytree} states that \LIHom{} can be solved in polynomial time when the guest graph has treewidth~1, while Theorem~\ref{t-np} implies that the problem is \NP-complete when the guest graph has treewidth~2. This shows that the bound on the pathwidth in the third statement of Theorem~\ref{t-np} is best possible. We leave it as an open problem to determine whether the bounds on the pathwidth in the other two statements of Theorem~\ref{t-np} can be reduced further.

We conclude this paper with some remarks on the parameterized complexity of the problems \LIHom{},  \LSHom{} and \LBHom{}. The hardness results in this paper show that all three problems are para-\NP-complete when parameterized by either the treewidth of $G$ or the maximum degree of $G$. Theorems~\ref{thm:bounded} and~\ref{t-bounded} show that the problems are in {\sf XP} when parameterized jointly by the treewidth of $G$ and the maximum degree of $G$. A natural question is whether the problems are {\sf FPT} when parameterized by the treewidth of $G$ and the maximum degree of $G$, i.e., whether they can be solved in time $f(\tw(G),\Delta(G))\cdot (|V_G|+|V_H|)^{O(1)}$ for some function~$f$ that does not depend on the sizes of $G$ and $H$.

\medskip
\noindent
{\small
{\bf Acknowledgements.} We would like to thank Isolde Adler for posing the research
questions that we addressed in our paper and for helpful discussions. The fourth
author also thanks Jan Arne Telle for fruitful discussions.}

\end{document}